\newtheorem{definition}{Definition}
\newtheorem{remark}{Remark}
\newtheorem{theorem}{Theorem}
\newtheorem{proposition}{Proposition}
\newtheorem{example}{Example}
\newtheorem{assumption}{Assumption}
\begin{document}

\title{The Incentive Ratio in Exchange Economies}

\author{Ido Polak\thanks{I am grateful to Xiaohui Bei and Satoru Takahashi for discussions on the topic and many useful  suggestions and comments on earlier versions of this manuscript. All remaining errors are of course my own.}\\
Division of Mathematical Sciences\\
School of Physical and Mathematical Sciences \\
 Nanyang Technological University \\
21 Nanyang Link \\
Singapore 637371 \\
S120059@e.ntu.edu.sg
}

\maketitle

\begin{abstract}
The incentive ratio measures the utility gains from strategic behaviour. Without any restrictions on the  setup, ratios for linear, Leontief and Cobb--Douglas exchange markets are unbounded, showing that manipulating the equilibrium is a worthwhile endeavour, even if it is computationally challenging. Such unbounded improvements can be achieved even if agents only misreport their utility functions. This provides a sharp contrast with previous results from Fisher markets. When the Cobb--Douglas setup is more restrictive, the maximum utility gain is bounded by the number of commodities. By means of an example, we show that it is possible to exceed a known upper bound for Fisher markets in exchange economies.
\end{abstract}

\textbf{Keywords: }incentive ratio, competitive equilibrium, equilibrium manipulation, utility function, exchange economy

\section{Introduction}

General equilibrium theory and (noncooperative) game theory are among the most succesful and well-studied areas in economic theory. The former seeks to explain the existence of equilibria in multiple markets at the same time. The latter serves as the primary tool for predicting, analysing and describing the behaviour of rational agents' actions both in and out of equilibrium. 

Both branches have yielded an abundance of literature and served as the basis of much fruitful research. In this paper, we try to combine the two approaches for exchange economies. Specifically, we ask how much any individual agent can gain from strategically misreporting some of his/her utility function. The primary tool for this will be the \textit{incentive ratio} as introduced by \cite{Chen2011}, \cite{Chen2012}. In a nutshell, when all agents report truthfully, an equilibrium will materialise, assigning all agents an equilibrium bundle. But given the endowments and utility functions of the other agents, a strategic agent may be better off by misreporting his/her own characteristics, thereby enforcing another equilibrium and, therefore, potentially obtaining a better equilibrium bundle. The incentive ratio tries to capture the maximal magnitude of such gains over all envisionable setups. A formal definition will follow below. The results presented here suggest that, contrary to previous findings in Fisher markets, the gains from strategic behaviour may be significant, even allowing an agent to improve his equilibrium utility without bound. If we impose the (common) restriction that all agents possess at least a little bit of every commodity and the market is strongly competitive, the utility gain in Cobb--Douglas markets is bounded by the number of commodities, but it may exceed the upper bound from Fisher markets, which we show by means of an example. The results obtained show a sharp contrast with the findings in \cite{Chen2011}, \cite{Chen2012}: in the Fisher market setup, incentive ratios are bounded by the small constants 2, 2 and $e^{1/e} \approx 1.44$ for linear, Leontief and Cobb--Douglas markets, respectively. 

\subsection*{Related work} 
 In \cite{AAAI148373}, price of anarchy bounds are computed for linear, Leontief and Cobb--Douglas markets in the Fisher model. That is, the ratio between (utilitarian) welfare of all agents in the worst possible pure Nash equilibrium (for which existence is proven) and maximum welfare. It primarily differs from the analysis presented here in that it focuses on welfare of all agents rather than measuring the benefits of strategic behaviour for one specific agent. As in this paper, the strategic variable of interest is the utility function. Arguably, misreporting the utility function i.e. one's preferences, is easier than misreporting (physical) endowments, which can, at least in theory, be inspected and whose withholding may be difficult and costly (e.g. due to storage costs). It is known that the Walrasian mechanism is susceptible to manipulation via endowments: via withholding endowments and recovering it fully \cite{Postlewaite}, recovering part of it \cite{Thomson} (see also \cite{Yi}) and even destroying part of one's initial endowment \cite{AumannPeleg}. However, the aforementioned studies are qualitative, that is, they show that manipulation of the equilibrium mechanism is possible, but do not quantify it. The incentive ratio is a first step to quantifying the possible gains of misreporting in exchange economies. It is not so difficult to see from Proposition \ref{prop:leontief} below that even when destroying part of one's initial endowment (in the case $e_i$ is the strategic variable of interest), the incentive ratio may still tend to infinity.

The idea that an agent may act strategically in a (Fisher) market by misreporting his/her utility function in order to get a better equilibrium bundle, compared to the scenario where everyone is truthful, was already considered in \cite{Adsul2010} for the case of linear utility functions. In this paper, Nash equilibria and the relation with conflict-freeness (meaning that every agent can get an optimal equilibrium bundle given the strategy profile) is studied. It is shown that being conflict-free is a necessary condition for a Nash equilibrium, and for a symmetric strategy profile it is also sufficient. 
The incentive ratio was first coined in \cite{Chen2011}, where the strategic variable of interest is the (Leontief) utility function of a player and bidding the true budget is a dominant strategy. In \cite{Chen2012}, \cite{AAAI1612300}, a slightly more sophisticated version of the incentive ratio is presented, in which players may also strategise on their endowments. The ``exchange market game'' is introduced in \cite{Mehta2013}, and agents have linear utility functions. They may lie about their utility function to manipulate the outcome of the exchange process. It is consequently shown that a symmetric strategy profile is a  Nash equilibrium if and only if it is conflict-free. Several properties of the symmetric Nash equilibria (e.g. Pareto-optimal payoffs, a characterisation for its uniqueness) are subsequently derived. 

\subsection*{Organisation}
The rest of this paper is organised as follows. Section \ref{sec:prelim} discusses the necessary machinery, definitions and introduces some notation. Next, Section \ref{sec:results} presents the results for incentive ratios in Linear, Leontief and Cobb--Douglas exchange economies. The latter receives most attention. It is shown that, even if the strategy space is restricted to the utility function only, all three incentive ratios are unbounded. Under some specific, common assumptions, the $e^{1/e}$ bound is recovered for Cobb--Douglas markets with 2 commodities. For the case where there are at least 3 commodities, the incentive ratio is shown to be greater than $e^{1/e}$ but bounded by the number of commodities.  Finally, \ref{sec:conclusion} concludes and provides some directions for future research.

\section{Preliminaries, exchange economies and competitive equilibrium}
\label{sec:prelim}

We use the following notation. Suppose $x, y \in \mathbb{R}^n$. Then $ x \cdot y = \sum_{k = 1}^n x_k y_k$ denotes the dot product of $x$ and $y$. $x \leq y$ means $x_k \leq y_k$ for $k = 1, \dots, n$. 
 For a vector $u = (u_1, \dots, u_n)$, by $u_{-i}$ we mean the vector $(u_1, \dots, u_{i-1}, u_{i+1}, \dots, u_n)$ (i.e. all entries except the $i$-th). 
For positive integer $n$, we use $[n]$ as shorthand notation for the set $\{1, \dots, n\}$. $I_m$ is the $m \times m$ identity matrix. The transpose of a matrix $M$ is denoted by $M^T$ and its determinant by $\left| M \right|$. If $f : \mathbb{R}^m \rightarrow \mathbb{R}^n$, then $Df(x)$ represents the Jacobian matrix of $f$ at $x$.

Depending on agents' endowments we speak either of Fisher markets \cite{RePEc:cwl:cwldpp:1272} (in case of monetary endowments) or, more generally, in case of (a vector of) commodity endowments, of exchange economies. In the former case, all agents bring a certain amount of cash to the market and budgets are exogeneous to the model, whereas in the latter the value of the endowment is endogeneous, determined by market prices. That is,  in exchange economies, each agent $i$ is endowed with a bundle of commodities $e_i \in \mathbb{R}_+^m$, whereas in the case of Fisher markets, he simply possesses an amount of wealth $w_i \in \mathbb{R}_+$. We will focus on the more general model of exchange economies with $n$ agents and $m$ commodities. 

\begin{definition}[Exchange economy] An exchange economy (henceforth also simply referred to as economy) is a tuple $\xi = ((u_i)_{i = 1} ^n, (e_i)_{i = 1}^n)$, where \\ $u_i: \mathbb{R}_{+}^m \rightarrow \mathbb{R}$ is the utility function of agent $i \in [n]$ and $e_i \in \mathbb{R}_{+}^m$ is a vector where $e_{ij}$ indicates how much agent $i \in [n]$ possesses of commodity $j \in [m]$.
\end{definition}

In an economy, agents interact with each other to obtain a bundle $x_i \in \mathbb{R}_+^m$ by trading commodities given a price vector $p \in \mathbb{R}^m$. If $p$ is such a price vector, then every agent solves the following consumer problem \eqref{prob:cp}.

\begin{definition}[Demand]
\begin{equation}
\begin{aligned}
& {\text{maximize}}
& & u_i(x_i) \\
& \text{subject to}
& & p \cdot x_i   \leq  p \cdot e_i \\
& & & x_i   \geq   0
\end{aligned}
\tag{$\mathcal{CP}$}
\label{prob:cp}
\end{equation}
We call the set of solutions to problem \eqref{prob:cp} the demand of agent $i$ (at prices $p$). 
\end{definition}
We can write $x_i(p, p \cdot e_i)$ to show explicitly that demand depends on endowments and prices. Since prices are in turn  determined by endowments and utility functions, we may also write 
$x_i(u_i, u_{-i}, e) $ or, when it is understood that $u_{-i}$ and $e$ are fixed, simply as 
$x_i(u_i) $.
The cental topic in this study is the notion of equilibrium, specifically, that of Walrasian or competitive equilibrium.

\begin{definition}[Competitive/Walrasian equilibrium]
A competitive equilibrium is a pair $(p,x) \in \mathbb{R}^m \times (\mathbb{R}_{+}^m)^n$ such that:
\begin{enumerate}
\item For all $j \in [m], \sum_{i = 1}^n x_{ij} = \sum_{i = 1}^n e_{ij}$ i.e. markets clear
\item For all $i \in [n], \: x_i$ is a solution to \eqref{prob:cp}, i.e. $x_i$ is the best bundle among the possible choices in the budget set.
\end{enumerate}
\end{definition}

\subsection{Definition of the incentive ratio and illustration in linear markets}
\label{sec:ir}

Every agent is characterized by two parameters, his endowment $e_i$ and his utility function $u_i$. Generally, different endowments and different utility functions will lead to different equilibria. What if an agent purposely misreports his utility function, thereby trying to get a better equilibrium allocation? 

The incentive ratio is a concept introduced in \cite{Chen2011}. It attempts to measure the (maximum) benefits of manipulating the equilibrium mechanism by strategically misreporting personal parameters. Formally, we define it as follows (adapted for exchange economies, the original definition was given for Fisher markets  in \cite{AAAI1612300}, \cite{Chen2011}, \cite{Chen2012}):
\newpage
\begin{definition}[Incentive ratio]
The incentive ratio of agent $i$ in a market $M$ (e.g. linear, Cobb--Douglas or Leontief), denoted $\zeta_i^M$, is defined as:
$$\zeta_i^M = \max_{u_{-i} \in U_{-i}, e_{-i} \in (\mathbb{R}_+^{m})^{n-1}} \max_{u'_i \in U_i}  \frac{\max_{x' \in \mathcal{E}(u'_i)} u_i(x'_i(u'_i, u_{-i}, e))}{\min_{x \in \mathcal{E}(u_i)} u_i(x_i(u_i, u_{-i}, e))}$$

The incentive ratio of the market $M$ is subsequently defined as
$$\zeta^M = \max_{i \in [n]} \zeta_i^M$$
\end{definition}

\begin{remark}
In this definition:
\begin{itemize}
\item Variables with a prime $(')$ refer to the scenario in which agent $i$ misreports his parameters (and all other agents report truthfully). That is, he reports $u'_i$ and as a result, obtains a bundle $x'_i(u'_i)$. Notice that this bundle is evaluated by the true utility function.
\item Given that player $i$ reports $\tilde{u}_i$ (i.e. truthful or not) as his utility function (and the other players $u_{-i}$), we denote by $\mathcal{E}( \tilde{u}_i)$ the set of equilibrium allocations, that is,
$$\mathcal{E}(\tilde{u}_i) = \{x \in (\mathbb{R}_+^m)^n \vert \: \exists p \in \mathbb{R}_+^m \:\: (p,x) \text{ is a Walras equilibrium}\}.$$ 
under some (mild) assumptions this set is nonempty, but it could contain multiple equilibrium allocations.
\item $U_i$ contains the admissible strategies/utility functions for player $i$, including the one that agent he chooses when he misreports his utility function. We denote $U_{-i} = \prod_{k \neq i} U_k$. We will only consider the case where all $U_k$'s are equal to a common $U$, thus $U_{-i} = U^{n-1}$. 

\item From the preceding arguments, we may restrict attention to agent $i$, since all agents can be treated symmetrically, thus we may rewrite the incentive ratio for the market $M$ as $$\zeta^M = \max_{u_{-i} \in U^{n-1}, e_{-i} \in (\mathbb{R}_+^{m})^{n-1}} \max_{u'_i \in U}  \frac{\max_{x' \in \mathcal{E}(u'_i)} u_i(x'_i(u'_i, u_{-i}, e))}{\min_{x \in \mathcal{E}(u_i)} u_i(x_i(u_i, u_{-i}, e))}$$
\end{itemize}

\end{remark}

\begin{example}
The following example for linear markets shows that the consequences for the incentive ratio stemming from the nonuniqueness of equilibrium can be large. Here, the strategic influence of an agent is perhaps less relevant than the equilibrium selection problem that is embodied in the definition of the incentive ratio: the truthful equilibrium could be very bad for an agent (yielding low utility) while the nontruthful equilibrium could be very good for him (yielding high utility). \\ 

Setup:
$e_1 = (\epsilon, 1-\epsilon), $
$e_2 = (1-\epsilon, \epsilon),$
$u_2(x_2) \equiv 0, $
$\epsilon > 0$ and small
\begin{align}
& \textbf{Truthful} & & \textbf{Nontruthful} \nonumber \\ 
& u_1(x_1)  =  x_{11} & & u'_1(x'_1)   =   x'_{11} \nonumber \\
&p  =  (1, 0) && p'   =   (1,1)  \nonumber \\
&x_1  = ( \epsilon, 1-\epsilon), x_2 =(1-\epsilon, \epsilon) && x'_1   =   (1, 0),  x'_2  =   (0,1)\nonumber
\end{align}

We have $$\frac{u_1(x'_1)}{u_1(x_1)} = \frac{1}{\epsilon}$$
Letting $\epsilon$ tend to 0, the incentive ratio tends to $\infty$. 



\end{example}





Although arguably the utility function for agent 2 is not very realistic, he doesn't care about any of the products on the market, it does not seem that such a scenario was ruled out by the authors in \cite{Chen2012}. 
In fact, in \cite{Chen2011}, \cite{Chen2012}, the following markets are considered, with arbitrarily many agents and commodities: 
\begin{enumerate}
\item Linear, i.e. $U$ is the set of utility functions of the form $u(x) = \alpha \cdot x$ where $\alpha \in \mathbb{R}_+^m$.
\item Leontief, i.e. $U$ is the set of utility functions of the form $u(x) = \min_{j \in [m]} \{x_{j}/\alpha_{j} \}$ where $\alpha \in \mathbb{R}_{++}^m$
\item Cobb--Douglas, i.e. $U$ is the set of utility functions of the form $u(x) = \prod_{j = 1}^m x_{j}^{\alpha_{j}}$ where $0 \leq \alpha_{j} \leq 1$ for all $j \in [m]$ and $\sum_{j = 1}^m \alpha_{j} = 1$, so that $U$ is the set of Cobb--Douglas functions that are homogeneous of degree 1.
\end{enumerate}

The (tight) bounds proven in \cite{Chen2011}, \cite{Chen2012} for Fisher markets are 2, 2 and $e^{1/e}$ for linear, Leontief and Cobb--Douglas respectively.

\section{The incentive ratio in Leontief and Cobb--Douglas markets}
\label{sec:results}

We will, without loss of generality, restrict ourselves to scenarios where $e_i \in [0,1]^m$ for all $i \in [n]$ and $\sum_{i = 1}^n e_{ij} = 1$ for all $j \in [m]$. The results indicate that, without any further restrictions on the setup presented in \cite{Chen2011}, \cite{Chen2012} incentive ratios in Leontief and Cobb--Douglas exchange economies are unbounded.

\begin{proposition}
\label{prop:leontief}
The incentive ratio for Leontief and Cobb--Douglas exchange economies equals $+\infty$ (i.e. $\forall n \in \mathbb{N}$ there exists a market such that the incentive ratio for player $i$ is at least $n$).
\end{proposition}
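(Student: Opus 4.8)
\medskip
\noindent\textit{Proof plan.}
The plan is to exploit the one structural feature that separates exchange economies from Fisher markets: an agent's budget $p\cdot e_i$ is endogenous, so by misreporting his utility an agent can move the equilibrium price vector and thereby turn a nearly worthless endowment into real purchasing power. Since the incentive ratio is a supremum over all admissible data, it is enough to produce, for each $n$, \emph{one} economy in which the truthful equilibrium leaves agent $1$ with utility of order $\epsilon$ while one well-chosen misreport guarantees him a utility bounded below by a positive constant; letting $\epsilon\to 0$ then pushes the ratio past $n$. I keep the normalisation $e_i\in[0,1]^m$, $\sum_i e_{ij}=1$ and use the explicit Leontief and Cobb--Douglas demands.

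\medskip
For Leontief I would take two agents and two commodities, with $e_1=(1-\epsilon,\epsilon)$, $e_2=(\epsilon,1-\epsilon)$ and true utilities $u_1(x)=\min\{x_1,x_2\}$, $u_2(x)=\min\{x_1/\delta,\,x_2\}$ for a fixed $\delta\in(0,1)$, say $\delta=1/2$. The first step is to describe the truthful equilibria: because agent $2$ is essentially only interested in commodity $2$, the equilibrium price must be $p=(0,1)$, at which agent $1$ has wealth $\epsilon$ and demand set $\{(s,\epsilon):s\ge\epsilon\}$; markets then clear for every way of splitting commodity $1$, and each such equilibrium gives agent $1$ utility exactly $\epsilon$. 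Since $p\cdot e_1>0$ in any equilibrium, the denominator $\min_{x\in\mathcal E(u_1)}u_1(x_1)$ is positive and equals $\epsilon$. The second step is the misreport $u_1'(x)=\min\{x_1,\,x_2/\delta\}$: one checks directly that $p=(\tfrac12,\tfrac12)$ is a Walrasian equilibrium of the reported economy, with agent $1$ receiving $\bigl(\tfrac1{1+\delta},\tfrac\delta{1+\delta}\bigr)$, whose true $u_1$-value is $\tfrac\delta{1+\delta}$. Hence the incentive ratio is at least $\dfrac{\delta/(1+\delta)}{\epsilon}$, which exceeds $n$ once $\epsilon$ is small enough. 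The only slightly non-routine point is pinning down the truthful price, a short excess-demand argument: aggregate demand for commodity $2$ equals supply at $p_1=0$, is increasing there, and $p_2=0$ is impossible.

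\medskip
For Cobb--Douglas I would aim at the same phenomenon — an endowment that is almost worthless truthfully but becomes valuable once agent $1$ pretends to want the commodity he holds — but the argument must now be genuinely constructive, since Cobb--Douglas equilibria are unique (there is no equilibrium-selection slack) and, as the later results of this section show, two commodities alone cannot give an unbounded ratio. So I would use $m\ge3$ commodities, with agent $1$ holding almost all of commodity $1$ and the remaining agents holding and (mostly) wanting commodities $2,\dots,m$ while attaching only a small weight $\gamma$ to commodity $1$; truthfully commodity $1$ then has price of order $\gamma$ and agent $1$'s utility is of order $\gamma$, whereas a report concentrating agent $1$'s weight on commodity $1$ raises its price, inflates his budget, and — because with three or more commodities the residual budget can be steered onto the commodities he truly wants — should leave him with utility bounded away from $0$ as $\gamma\to0$.

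\medskip
The step I expect to be the main obstacle is precisely this last construction. In the obvious symmetric attempts the weight agent $1$ reports on a commodity cancels exactly against the equilibrium price it induces, so the manipulation yields no gain at all; breaking this cancellation forces an asymmetric design, and one then has to solve for the (unique) equilibrium prices explicitly and verify that the chosen misreport is strictly profitable and that the gain does not collapse in the limit. Once such a family is in place, sending $\gamma\to0$ (or, if need be, letting $m\to\infty$) closes the argument exactly as in the Leontief case.
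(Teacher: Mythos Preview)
Your Leontief construction is correct and actually carries more content than the paper's. The paper exhibits the same endowments $e_1=(1-\epsilon,\epsilon)$, $e_2=(\epsilon,1-\epsilon)$ but with $u_2=\min\{x_{21},x_{22}\}$ and, crucially, takes $u_1'=u_1$: the unbounded ratio there comes entirely from equilibrium \emph{selection} (a continuum of prices $p=(\delta,1)$ in the truthful economy, with the min/max in the definition doing all the work). Your version makes agent $2$ asymmetric, pins the truthful price down to $p=(0,1)$, and then has agent $1$ genuinely misreport to move the price to $(1/2,1/2)$. Both are valid under the definition; yours separates the manipulation effect from the multiplicity effect, at the cost of a slightly longer verification.

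The Cobb--Douglas part, however, has a real gap, and it stems from a misreading of the later results. The bound ``two commodities cannot give an unbounded ratio'' is proved \emph{only under Assumption~\ref{assumption1}} (strictly positive endowments and strong competitiveness). Proposition~\ref{prop:leontief} is stated \emph{before} that assumption is imposed, so corner endowments are allowed. The paper simply takes $m=2$, $e_1=(1,0)$, $e_2=(0,1)$, $u_2(x)=x_{21}^{\epsilon}x_{22}^{1-\epsilon}$, $u_1(x)=x_{11}^{1/2}x_{12}^{1/2}$, and the misreport $u_1'(x)=x_{11}$. Truthfully $p=(1,1/(2\epsilon))$ and $x_1=(1/2,\epsilon)$; under the misreport $p'=(1,0)$ collapses agent $2$'s budget to zero, so $x_1'=(1,1)$ and the ratio is $\sqrt{2/\epsilon}$. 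No three-commodity asymmetry, no limit in $m$, no delicate cancellation to break.

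Your proposed route---keep endowments strictly positive and go to $m\ge 3$---is not just harder, it is blocked: under Assumption~\ref{assumption1} the paper proves the ratio is at most $m$, so for any fixed $m$ you will never get an unbounded ratio, and the fallback ``let $m\to\infty$'' would require showing the ratio actually approaches $m$, which is neither established nor attempted. Drop the self-imposed positivity constraint and the two-commodity example goes through in a few lines.
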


\begin{proof}[Proof. (Leontief)]
$e_1 = (1-\epsilon, \epsilon), $
$e_2 = (\epsilon, 1-\epsilon), $
$u_2(x_2) = \min\{x_{21},x_{22}\}, $
$\epsilon > 0$ and small
\begin{align}
& \textbf{Truthful} & & \textbf{Nontruthful} \nonumber \\ 
& u_1(x_1) = \min\{x_{11},x_{12}\} & & u'_1(x'_1)  = \min\{x'_{11},x'_{12}\}\nonumber \\
& \text{then} & & \text{then} \nonumber \\ 
&p  =  (\delta, 1) && p'   =   (1,1)  \nonumber \\
&x_1  = ((\epsilon+\delta-\delta\epsilon)/(1+\delta), (\epsilon+\delta-\delta\epsilon)/(1+\delta)) && x'_1   =   (1/2, 1/2) \nonumber \\
&x_2 =(1-\epsilon+\delta\epsilon)/(1+\delta), (1-\epsilon+\delta\epsilon)/(1+\delta)) && x'_2  =   (1/2,1/2) \nonumber
\end{align}
We have $$\frac{u_1(x'_1)}{u_1(x_1)} = \frac{1+\delta}{2(\epsilon+\delta-\delta\epsilon)}$$
Letting $\delta, \epsilon$ tend to 0, the incentive ratio tends to $\infty$. 
\end{proof}

\label{sec:cd}

\begin{proof}[Proof. (Cobb--Douglas)]

$e_1 = (1, 0), $
$e_2 = (0, 1), $
$u_2(x_2) = x_{21}^{\epsilon}x_{22}^{(1-\epsilon)}, $
$\epsilon > 0$ and small
\begin{align}
& \textbf{Truthful} & & \textbf{Nontruthful} \nonumber \\ 
& u_1(x_1)  =  x_{11}^{.5}x_{12}^{.5} & & u'_1(x'_1)   =   x'_{11} \nonumber \\
& \text{then} & & \text{then} \nonumber \\ 
&p  =  (1, 1/(2\epsilon)) && p'   =   (1,0)  \nonumber \\
&x_1  = (1/2 , \epsilon) && x'_1   =   (1, 1) \nonumber \\
&x_2 =(1/2, 1-\epsilon) && x'_2  =   (0,0) \nonumber
\end{align}

We have $$\frac{u_1(x'_1)}{u_1(x_1)} = \sqrt{2{\epsilon}^{-1}}$$
Letting $\epsilon$ tend to 0, the incentive ratio tends to $\infty$. \qedhere

\end{proof}

The intuition behind the proof for Cobb--Douglas markets is that it is possible for player 1 to completely annihilate the equilibrium value of player 2's endowment. Since $u_2(e_2) = 0$, the bundle $x_2 = (0,0)$ solves the consumer problem \eqref{prob:cp}, since the indifference curve $u_2(x_2) = 0$ is odd-shaped compared to the other indifference curves.


We make the following assumption to ensure all equilibrium prices are positive; this is rather standard in algorithmic game theory.\footnote{Alternatively, we could assume the existence of a nonmanipulating agent who possesses at least a little bit of all commodities and who desires every commodity.}

\begin{assumption}
\label{assumption1}
\begin{itemize}
\item (Positivity of endowments) Every agent possesses a strictly positive amount of every commodity: $\forall i \in [n], \forall j \in [m] \:\:\: e_{ij} > 0$;
\item (Strong competitiveness (see e.g. \cite{AAAI148373})) Every commodity is demanded by at least one agent: $\forall j \in [m] \: \exists i \in [n] \: \alpha_{ij} > 0$, $\forall j \in [m] \: \exists i \in [n]  \: \alpha'_{ij} > 0$ 
\end{itemize}
\end{assumption}
This entails that the economy excess demand function

$$z(p) := \sum_{i=1}^n (x_i(p, p \cdot e_i) - e_i) = \sum_{i=1}^n x_i(p, p \cdot e_i) - 1 $$ 

 has the gross substitute property and this implies that the equilibrium price is unique (see e.g. \cite{MWG}).


For Cobb--Douglas markets with 2 commodities, the incentive ratio is $e^{1/e}$ i.e. as in the Fisher market scenario. Here, we rule out a setup in which there exists an agent that owns all of a certain commodity (as was the case in the above example), an assumption that is not unusual. The result is easily extended to an arbitrary number of agents. We present here the case $n=2$. 

\begin{proposition}
Consider a Cobb--Douglas economy with $n = 2$ players and $m = 2$ commodities, in which both players hold a strictly positive amount of both commodities. The incentive ratio is $e^{1/e}$ and this bound is tight.
\end{proposition}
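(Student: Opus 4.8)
The plan is to write everything explicitly using the closed-form solution for Cobb–Douglas equilibrium. With two commodities and two agents, let agent $1$'s true utility exponents be $(\alpha, 1-\alpha)$ and agent $2$'s be $(\beta, 1-\beta)$, with endowments $e_1 = (a, 1-b)$, $e_2 = (1-a, b)$ (using the normalization $\sum_i e_{ij} = 1$), where $a, b \in (0,1)$ by the positivity hypothesis. For Cobb–Douglas, agent $i$ spends a fixed fraction of wealth $p \cdot e_i$ on each good, so the market-clearing condition in good $1$ gives a single linear equation in the price ratio $q := p_1/p_2$: namely $\alpha(qa + (1-b)) + \beta(q(1-a) + b) = q$. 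Solving this yields $q$ as an explicit rational function of $(\alpha,\beta,a,b)$, and then $x_{11}(\alpha) = \alpha(qa + (1-b))/q$, $x_{12}(\alpha) = (1-\alpha)(qa+(1-b))$. First I would record these formulas and note that equilibrium is unique here (Cobb–Douglas demand is single-valued and, under positivity, prices are strictly positive and unique by gross substitutes, as noted before Assumption~\ref{assumption1}), so the inner $\max$ and $\min$ over $\mathcal{E}$ collapse to single values.

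Next I would set up the optimization. Agent $1$'s true utility at the misreported profile, where he reports exponents $(\gamma, 1-\gamma)$, is $u_1(x_1'(\gamma)) = x_{11}'(\gamma)^{\alpha} x_{12}'(\gamma)^{1-\alpha}$, and the denominator is $u_1(x_1(\alpha))$ — the truthful value. The incentive ratio is then the supremum over all $\alpha,\beta,a,b$ and all $\gamma$ of the ratio $u_1(x_1'(\gamma))/u_1(x_1(\alpha))$. The key structural observation I would exploit is that misreporting only changes agent $1$'s spending split and hence the price ratio $q$; agent $1$'s wealth is $w_1(q) = qa + (1-b)$, which itself depends on $q$ and hence on $\gamma$. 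So the whole effect of the report $\gamma$ is channelled through the scalar $q' = q'(\gamma)$, and as $\gamma$ ranges over $[0,1]$, $q'$ ranges over an interval determined by $\beta, a, b$. I would reparametrize the problem so that agent $1$ effectively chooses $q'$ directly within its feasible range, and then chooses the best bundle on his budget line at prices $(q',1)$ — but crucially he is NOT free to choose the bundle optimally for $u_1$ at those prices, because $x_1'(\gamma)$ is the Cobb–Douglas demand for $(\gamma,1-\gamma)$, not for $(\alpha,1-\alpha)$. However, as $\gamma$ sweeps $[0,1]$, the demanded bundle sweeps the entire budget line, so in fact agent $1$ \emph{can} reach any point on the line $q' x_{11} + x_{12} = q' a + (1-b)$; thus the numerator is $\max$ over $q'$ in the feasible interval of the value of $u_1$ optimized freely on that budget line, which is the standard indirect utility $V(q', w_1(q')) = C_\alpha \, w_1(q') \, (q')^{-\alpha}$ with $C_\alpha = \alpha^\alpha (1-\alpha)^{1-\alpha}$.

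The heart of the argument is then to bound the ratio
\[
\frac{\max_{q'} w_1(q') (q')^{-\alpha}}{w_1(q_{\mathrm{tr}}) (q_{\mathrm{tr}})^{-\alpha}} \cdot \frac{x_{11}(\alpha)^\alpha x_{12}(\alpha)^{1-\alpha}}{C_\alpha w_1(q_{\mathrm{tr}}) (q_{\mathrm{tr}})^{-\alpha}}^{-1},
\]
which after cancellation (the truthful bundle is itself the Cobb–Douglas demand, so $x_{11}(\alpha)^\alpha x_{12}(\alpha)^{1-\alpha} = C_\alpha w_1(q_{\mathrm{tr}})(q_{\mathrm{tr}})^{-\alpha}$) reduces simply to $\max_{q'} \big[w_1(q')(q')^{-\alpha}\big] \big/ \big[w_1(q_{\mathrm{tr}})(q_{\mathrm{tr}})^{-\alpha}\big]$. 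So the incentive ratio equals the supremum over parameters of $w_1(q')(q')^{-\alpha} / (w_1(q_{\mathrm{tr}})(q_{\mathrm{tr}})^{-\alpha})$. I would then analyze the function $g(q) = w_1(q) q^{-\alpha} = (qa + (1-b)) q^{-\alpha}$: it is single-peaked in $q$, with maximum where $q^* a (1-\alpha) = \alpha(1-b)$, i.e. $q^* = \frac{\alpha(1-b)}{(1-\alpha)a}$. The worst case is when the feasible range of $q'$ contains $q^*$ while $q_{\mathrm{tr}}$ is as far from $q^*$ as the constraints allow; pushing $\beta \to 0$ or $\beta \to 1$ and $a, b \to$ their extremes, one computes $g(q^*)/g(q_{\mathrm{tr}})$. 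The bound $e^{1/e}$ should emerge exactly as in the Fisher case: writing $g(q^*)/g(q_{\mathrm{tr}}) = \big(\tfrac{q_{\mathrm{tr}}}{q^*}\big)^{\alpha} \cdot \tfrac{a q^* + (1-b)}{a q_{\mathrm{tr}} + (1-b)}$ and optimizing over $\alpha$ leads to a $t^{-\alpha}(\ldots)$ expression whose maximum over $\alpha \in [0,1]$ is $e^{1/e}$, achieved in the limit. The main obstacle I anticipate is controlling the feasible range of $q'$: because agent $1$'s wealth moves with $q'$, the set of achievable price ratios is not obviously all of $(0,\infty)$, and I must verify that the positivity constraints ($a,b\in(0,1)$, and $\beta\in[0,1]$ with strong competitiveness) genuinely allow $q'$ to reach the peak $q^*$ (or get arbitrarily close) in the relevant limits, while simultaneously forcing $q_{\mathrm{tr}}$ far enough away — and then checking the tightness by exhibiting an explicit sequence of economies approaching the bound, mirroring the known Fisher-market extremal configuration.
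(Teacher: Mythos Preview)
Your reduction to the indirect-utility function contains a genuine gap. The report $\gamma$ is a \emph{single} scalar; it simultaneously determines the equilibrium price ratio $q'(\gamma)$ \emph{and} the point the agent receives on the budget line at that $q'(\gamma)$, namely the Cobb--Douglas demand for exponents $(\gamma,1-\gamma)$. You cannot fix $q'$ and then let $\gamma$ sweep the budget line, because the map $\gamma\mapsto q'(\gamma)$ is one-to-one. Concretely,
\[
u_1\bigl(x_1'(\gamma)\bigr)=\gamma^{\alpha}(1-\gamma)^{1-\alpha}\,w_1(q'(\gamma))\,q'(\gamma)^{-\alpha},
\]
so the ratio is
\[
\frac{u_1(x_1'(\gamma))}{u_1(x_1(\alpha))}
=\frac{\gamma^{\alpha}(1-\gamma)^{1-\alpha}}{\alpha^{\alpha}(1-\alpha)^{1-\alpha}}\cdot
\frac{g(q'(\gamma))}{g(q_{\mathrm{tr}})},\qquad g(q)=(aq+1-b)\,q^{-\alpha}.
\]
Replacing the first factor by its maximum $1$ gives only an \emph{upper} bound, not the equality you assert; and that upper bound is far too loose to yield $e^{1/e}$. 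Indeed, your second error compounds the first: $g(q)=aq^{1-\alpha}+(1-b)q^{-\alpha}$ is decreasing-then-increasing, so the critical point $q^*=\tfrac{\alpha(1-b)}{(1-\alpha)a}$ is a \emph{minimum}, not a maximum. Hence $g$ is unbounded on $(0,\infty)$, and on the feasible interval for $q'$ (which, for small $\beta$, reaches arbitrarily close to $0$) one easily gets $g(q')/g(q_{\mathrm{tr}})\gg e^{1/e}$; e.g.\ with $\alpha=\tfrac12$, $\beta=0.01$, $a=b=\tfrac12$ this ratio exceeds $6$. The factor $\gamma^{\alpha}(1-\gamma)^{1-\alpha}$ is precisely what keeps the true ratio small at the extremes, so discarding it destroys the bound.

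The paper proceeds differently and avoids this decoupling. It writes the ratio directly as $T_1^{\alpha}T_2^{1-\alpha}$ with
\[
T_1=\frac{\alpha'(\alpha e_{12}+\beta e_{22})}{\alpha(\alpha' e_{12}+\beta e_{22})},\qquad
T_2=\frac{(1-\alpha')(1-\alpha e_{11}-\beta e_{21})}{(1-\alpha)(1-\alpha' e_{11}-\beta e_{21})},
\]
and then does a short case split: if $\alpha'\ge\alpha$ one checks $T_1\le\alpha'/\alpha$ and $T_2\le1$, while if $\alpha'<\alpha$ one has $T_1<1$ and $T_2<(1-\alpha')/(1-\alpha)$. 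Either way the ratio is at most $(\alpha'/\alpha)^{\alpha}$ or $((1-\alpha')/(1-\alpha))^{1-\alpha}$, and the elementary inequality $x^{y}\le e^{xy/e}$ (applied with $xy\le1$) gives $e^{1/e}$. The key point you are missing is that the ``price effect'' and the ``misallocation effect'' must be bounded \emph{together}, not separately; the paper's $T_1,T_2$ decomposition does exactly that.
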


\begin{proof}
First consider such an exchange economy in its most general form: 
\begin{itemize}
\item Endowments: $e_1 = (e_{11}, e_{12}), e_2 = (1-e_{11}, 1-e_{12})$, $(e_{11}, e_{12}) \in (0,1)^2$
\item Utility functions: $u_1(x_{11}, x_{12}) = {x_{11}}^\alpha {x_{12}}^{1-\alpha}$, $u'_1(x'_{11}, x'_{12}) = {x'_{11}}^{\alpha'} {x'_{12}}^{1-\alpha'}$ and $u_2(x_{21}, x_{22}) = {x_{21}}^\beta {x_{22}}^{1-\beta}$; $\alpha, \alpha', \beta \in (0,1)$. 
\end{itemize} 

Notice that, here both agents have a little bit of both commodities so we can not be in the situation of the example above. Normalize $p_1 = 1$. Then the value of agent 1's endowment is $p \cdot e_1 = e_{11} + e_{12}p_2$ and  the value of agent 2's endowment  is $p \cdot e_2 = 1-e_{11} + (1-e_{12})p_2$. Demands follow from the Karush-Kuhn-Tucker conditions as per usual: 

\begin{equation*}
\left\{\begin{array}{lr}
   \displaystyle
   x_{11} = \alpha e_{11} + \alpha e_{12}p_2 \\
   \displaystyle
  x_{12} = (1-\alpha)e_{11}/p_2 + (1-\alpha)e_{12}
        \\
   x_{21}  = \beta(1-e_{11})+\beta(1-e_{12})p_2 \\
   \displaystyle
  x_{22} = (1-\beta)(1-e_{11})/p_2 + (1-\beta)(1-e_{12})
        \end{array}\right.
\end{equation*}

We can solve for $p_2$ and we find that:
$$p_2 = \frac{1-\alpha e_{11} - \beta (1-e_{11})}{\alpha e_{12} + \beta (1-e_{12})}$$
Therefore
$$x_{11} = \frac{\alpha[(1-\beta)e_{12} + \beta e_{11}]}{\alpha e_{12} + \beta e_{22}} \mbox{ and } x_{12} = \frac{(1-\alpha)[(1-\beta)e_{12} + \beta e_{11}]}{1-\alpha e_{11}-\beta e_{21}}$$
Now we can find a closed form expression for the incentive ratio as follows:
$$\frac{u_1(x'_1)}{u_1(x_1)} = \left(\frac{\alpha'[\alpha e_{12} + \beta e_{22}]}{\alpha [\alpha' e_{12} + \beta e_{22}]}\right)^\alpha \left(\frac{(1-\alpha')[1-\alpha e_{11} - \beta e_{21}]}{(1-\alpha)[1-\alpha'e_{11}-\beta e_{21}]}\right)^{1-\alpha} =: {T_1}^\alpha {T_2}^{1-\alpha}$$
The following facts are easily verified:
\begin{enumerate}
\item $\alpha' \geq \alpha \Rightarrow T_1 \leq \alpha'/\alpha$
\item $\alpha' < \alpha \Rightarrow T_1 < 1$
\item $\alpha' \geq \alpha \Rightarrow T_2 \leq 1$
\item $\alpha' < \alpha \Rightarrow T_2 < (1-\alpha')/(1-\alpha)$
\end{enumerate}
These facts and the inequality from \cite{Chen2012} that $\forall x, y \geq 0, x^y \leq e^{xy/e}$ show that the incentive ratio is bounded by $e^{1/e}$. The example in \cite{Chen2012} for Fisher markets shows the bound is tight.
 \end{proof}


This bound can be exceeded, even when $m = 3$, as the following example shows.

\begin{example}[Incentive ratio $> e^{1/e}$]
\label{exa:e1/e}
Suppose the market is as follows:
\begin{equation*}
\left\{\begin{array}{lr}
   \displaystyle
   e_{1}  = (.99, .01, .01) \\
   e_{2}  = (.01, .99, .99) \\ 
   u_1(x_1) = x_{11}^{.2}x_{12}^{.3}x_{13}^{.5} \\
   u_2(x_2) = x_{21}^{.4}x_{12}^{.6} \\
   u'_1(x'_1) = {x'_{11}}^{.85}{x'_{12}}^{.1}{x'_{13}}^{.05} \\
\end{array}\right.
\text{ then }
\left\{\begin{array}{lr}
   \displaystyle
   p = (.398, .597, .201) \\
   x_1 \approx (.202, .202, 1) \\
   u_1(x_1) \approx .4495 \\
   p' = (.4045, .1344, .0201) \\
   x'_1 \approx (.845, .299, 1) \\
   u_1(x'_1) \approx .6731
   \end{array}\right.
\end{equation*}
Therefore the incentive ratio is approximately 1.50; a tight bound remains an open question.
\end{example}

The remainder of this section is devoted to the proof that the incentive ratio for Cobb--Douglas markets is bounded. We will use that, by the AM--GM inequality,

\begin{align*}
\prod_{j=1}^m \left(\frac{\alpha'_{ij}}{\alpha_{ij}}\frac{p' \cdot e_i}{p \cdot e_i}\frac{p_j}{p'_j}\right)^{\alpha_{ij}} 
& \leq \sum_{j=1}^m \frac{p' \cdot e_i}{p \cdot e_i}\max_{\alpha'_i} \frac{\alpha'_{ij}}{p'_j} \max_{\alpha_i} p_j 
\end{align*}
and then try to bound the $j-th$ term in this sum by choosing a particular normalisation for $p, p'$.
The following proposition contains three parts: first, it shows that $p_j$ attains its maximum at $\alpha_i = (0, \dots, 1, \dots, 0)$ where the 1 is the $j-th$ element in the vector; second $\alpha'_{ij}/p'_j$ also attains its maximum at $\alpha'_i = \alpha_i = (0, \dots, 1, \dots, 0)$. Finally, the budgets $p \cdot e_i$ and $p' \cdot e_i$ are equal for any choice of $\alpha_i, \alpha'_i$. 

\begin{proposition}
\label{bounds}
\begin{enumerate}
\item [i)] Normalise $p_j = 1$, then $p_j(\alpha_i)$ reaches its maximum when $\alpha_{ij} = 1$ and $\alpha_{ik} = 0$ for all $k \neq j$. 
\item [ii)] Normalise $p'_j = 1$, then $\alpha'_{ij}/p'_j(\alpha'_i)$ reaches its maximum when $\alpha'_{ij} = 1$ and $\alpha'_{ik} = 0$ for all $k \neq j$. 
\item [iii)] Let $A$ be the ``exponents matrix'' i.e. it contains the strategies for each player in such a way that column $i \in [n]$ contains $\alpha_i$ and so $A$ is an $m \times n$ matrix. Similarly,  $E$ is the ``endowment matrix'', where columns are indexed by agents and rows by commodities, so that it is $m \times n$.  Consider the matrices $EA^T-I_m$ and $E(A')^T-I_m$ and their adjugates, $Adj(EA^T-I_m)$ and $Adj(E(A')^T-I_m)$ respectively. Then the first row of the adjugate matrices contains the equilibrium price vectors $p$ and $p'$ (upto a nonzero constant) and moreover, $p \cdot e_i = p' \cdot e_i$.
\end{enumerate} 
\end{proposition}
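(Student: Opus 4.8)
The plan is to exploit the closed form of Cobb--Douglas demand, $x_{ij}=\alpha_{ij}(p\cdot e_i)/p_j$, to reduce everything to linear algebra over the matrices $EA^T$ and $EA^T-I_m$.

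First I would record the reduced market-clearing system. Plugging the demand into $\sum_i x_{ij}=\sum_i e_{ij}$ and using the normalisation $\sum_i e_{ij}=1$ gives $p_j=\sum_i\alpha_{ij}\,(p\cdot e_i)$ for all $j$, i.e. $p=AE^Tp$, equivalently $(EA^T-I_m)^{T}p=0$; likewise $(E(A')^T-I_m)^{T}p'=0$ with $A'$ in place of $A$. Under Assumption~1 every entry of $EA^T$ is strictly positive, since $(EA^T)_{jk}=\sum_l e_{lj}\alpha_{lk}$ and positivity of endowments supplies $e_{lj}>0$ while strong competitiveness supplies some $\alpha_{lk}>0$; moreover $EA^T$ is row-stochastic. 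Hence, by Perron--Frobenius, the eigenvalue $1$ of $EA^T$ is simple, with right eigenspace $\mathrm{span}\{\mathbf 1\}$ and left eigenspace $\mathrm{span}\{p\}$, $p>0$; in particular $\mathrm{rank}(EA^T-I_m)=m-1$.

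For (iii): a matrix of corank $1$ has a rank-one adjugate, and $M\,\mathrm{Adj}(M)=\mathrm{Adj}(M)\,M=0$ forces the column space of $\mathrm{Adj}(EA^T-I_m)$ to be the right kernel $\mathrm{span}\{\mathbf 1\}$ and its row space to be the left kernel $\mathrm{span}\{p\}$; thus $\mathrm{Adj}(EA^T-I_m)=c\,\mathbf 1\,p^{T}$ with $c\neq 0$, so every row — in particular the first — is a nonzero multiple of $p$ (and similarly $p'$ appears in $\mathrm{Adj}(E(A')^T-I_m)$). Taking $p$ to be literally the first row, so $p_j=C_{j1}(EA^T-I_m)$, cofactor expansion along the first column gives $p\cdot e_i=\sum_j C_{j1}(EA^T-I_m)\,e_{ij}=\det\widehat M_i$, where $\widehat M_i$ is $EA^T-I_m$ with its first column replaced by $e_i$. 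Since the $k$-th column of $EA^T-I_m$ equals $\alpha_{ik}e_i+\big(\sum_{l\neq i}\alpha_{lk}e_l-\hat e_k\big)$, after the replacement one may subtract $\alpha_{ik}$ times the new first column from column $k$ for each $k\geq 2$ without changing $\det\widehat M_i$, obtaining a matrix whose entries involve only $E$ and $\alpha_{-i}$. Hence $p\cdot e_i$ does not depend on $\alpha_i$; as $A'$ and $A$ differ only in their $i$-th column, $p\cdot e_i=p'\cdot e_i$.

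Parts (i) and (ii) carry the real content. Using a normalisation that does not involve $\alpha_i$ (e.g. fixing one coordinate of $p$, and of $p'$, to $1$; by (iii) the budget $p\cdot e_i$ is then constant in $\alpha_i$), the quantities $p_j$ and $\alpha'_{ij}/p'_j=x'_{ij}/(p'\cdot e_i)$ become genuine functions of agent $i$'s strategy, which one writes explicitly through Cramer's rule / the adjugate formula from (iii). The claim reduces to the monotonicity assertion that moving any of agent $i$'s demand weight off commodity $j$ onto another commodity can only decrease $p_j$ (resp. $x'_{ij}$): equivalently, that the free-coordinate partial $\partial p_j/\partial\alpha_{ij}$ dominates $\partial p_j/\partial\alpha_{ik}$ for every $k$ at every point of the simplex (and analogously for $x'_{ij}$), which forces the maximiser to the vertex $\alpha_i=\hat e_j$. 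Extracting the sign of these cofactor-ratio derivatives is the main obstacle; everything else — positivity of $EA^T$, the rank-one adjugate identity, and the column-operation argument for budget invariance — is routine once the setup above is in place.
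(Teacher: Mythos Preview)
Your treatment of (iii) is correct and in fact tidier than the paper's. Both arguments identify $p$ with the first row of $\mathrm{Adj}(EA^T-I_m)$ and evaluate $p\cdot e_i$ as the determinant of the matrix $\widehat M_i$ obtained by replacing the first column of $EA^T-I_m$ by $e_i$. Where you invoke Perron--Frobenius on the strictly positive, row-stochastic matrix $EA^T$ to get corank one and hence a nonzero rank-one adjugate, the paper instead checks directly that the $(1,1)$ cofactor is nonzero by observing that the corresponding $(m-1)\times(m-1)$ submatrix is strictly diagonally dominant (Levy--Desplanques). And where you eliminate $\alpha_i$ from $\det\widehat M_i$ in a single column-operation step, the paper perturbs only two coordinates $\alpha_{ij},\alpha_{ik}$ of $\alpha_i$ at a time and splits the determinant by multilinearity, with a case distinction on whether $1\in\{j,k\}$. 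Your route is shorter; the paper's avoids appealing to Perron--Frobenius.

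The gap is in (i)--(ii), which you rightly flag as the main obstacle. You propose to extract the signs of derivatives of cofactor ratios, but the paper bypasses this entirely: it differentiates the \emph{excess demand} system rather than the adjugate formula. Taking $j=m$ and normalising $p_m=1$, the implicit function theorem applied to $\hat z(\hat p;\alpha)=0$ gives
\[
D_{\alpha_{im}}\hat p \;=\; -\bigl[D_{\hat p}\hat z\bigr]^{-1}\, D_{\alpha_{im}}\hat z .
\]
The second factor is an entrywise negative vector (raising agent $i$'s weight on good $m$ lowers excess demand for every other good), and the first factor has all entries negative: under gross substitutes the Jacobian $D_{\hat p}\hat z$ has positive off-diagonals and a negative dominant diagonal, and a standard comparative-statics result (Proposition~17.G.3 in Mas-Colell, Whinston and Green) forces its inverse to be entrywise negative. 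Hence $D_{\alpha_{im}}\hat p<0$, which is precisely the monotonicity you need; the paper then declares (ii) immediate. So the idea you are missing is not a determinant manipulation but the gross-substitute structure of Cobb--Douglas excess demand, which turns the sign question into a one-line appeal to a textbook result.
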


\begin{proof}
For the first two points in the proof, without loss of generality we consider the case where $j = m$ i.e. $p_m = 1$. \\
$i)$ Let $p(\alpha)$ be the equilibrium price when players report strategies according to $\alpha = (\alpha_1, \dots, \alpha_n)$. We can see a strategy as follows: $\alpha_{ij} = \alpha_{ij} / \sum_{k=1}^m \alpha_{ik}$ for all $i \in [n], j \in [m]$  i.e. we transform the explicit constraint to an implicit one. 
Next, let $\hat{p}$ and $\hat{\alpha_i}$ be vectors with (the first) $m-1$ components and $\hat{z}$ is the excess demand function for the first $m-1$ commodities.
By the implicit function theorem 
\begin{equation*}
D_{\hat{\alpha}_{im}} \hat{p}(\alpha) = - \left[ D_{\hat{p}} \hat{z}(p(\alpha); \alpha)\right]^{-1}D_{\hat{\alpha}_{im}} \hat{z}(p(\alpha); \alpha)
\end{equation*}
By Proposition 17.G.3 in \cite{MWG}, $\left[ D_{\hat{p}} \hat{z}(p(s); \alpha)\right]^{-1}$ has all its entries negative. \\ $D_{\hat{\alpha}_{im}} \hat{z}(p(\alpha); \alpha)$ is a negative vector, this completes the proof.  

$ii)$ Immediate.

$iii)$ We now choose a particular normalisation for $p$ and show that we have $p \cdot e_i = p' \cdot e_i$.
 We apply an argument along the lines of \cite{Press26062012}. First, notice that an equilibrium price vector $p$ satisfies (see \cite{CurtisEaves1985})

$$p^T (EA^T - I_m) = 0$$

We have, where $e_j \alpha_k$ is a shorthand notation for $\sum_{i = 1}^n e_{ij} \alpha_{ik}$,

$$
EA^T - I_m = 
\begin{bmatrix}
e_1\alpha_1 -1 & e_1\alpha_2 & \hdots & e_1 \alpha_m \\
e_2\alpha_1  & e_2\alpha_2-1 & \hdots & e_2 \alpha_m \\
\vdots & \vdots & \ddots & \vdots \\
e_m \alpha_1 & e_m \alpha_2 & \hdots & e_m \alpha_m-1
\end{bmatrix}
$$

As argued in \cite{Press26062012}, the rows of Adj($EA^T - I_m$) are proportional to $p$. We only need to make sure that the ``proportionality factor'' $c$ is not 0. The $(1,j)-th$ entry of Adj($EA^T - I_m$) is what we get when we compute  the determinant of the matrix that we get when removing the first column and $j-th$ row from $EA^T - I_m$ if $j$ is odd and $-1$ times this determinant when $j$ is even. For example, for $p_1$, because of the assumption, 

\begin{equation*}
\left|
\begin{matrix}
e_2\alpha_2-1 & e_2\alpha_3 & \hdots & e_2 \alpha_m \\
e_3\alpha_2 & e_3\alpha_3-1 & \hdots & e_3 \alpha_m \\
\vdots & \vdots & \ddots & \vdots \\
e_m\alpha_2 & e_m \alpha_3 & \hdots & e_m \alpha_m-1
\end{matrix}\right|
\end{equation*}

is a (strictly) diagonal matrix and therefore, nonsingular by the Levy--Desplanques theorem. This shows that the above determinant is nonzero and so, up to the sign, the equilibrium price of commodity 1, $p_1$. Therefore $c \neq 0$ and we conclude that up to a (possibly negative) scalar, we have the equilibrium price $p$ in the first row of Adj($EA^T - I_m$). 

We note that, instead of computing equilibrium prices, we could just compute $p \cdot e_i$ directly by replacing the first column in $EA^T - I_m$ by $e_i$ and computing the determinant (with respect to the first column) of the resulting matrix: 

\begin{equation*}
p \cdot e_i = \frac{1}{c} \left|
\begin{matrix}
e_{i1} & e_1\alpha_2 & \hdots & e_1 \alpha_m \\
e_{i2}  & e_2\alpha_2-1 & \hdots & e_2 \alpha_m \\
\vdots & \vdots & \ddots & \vdots \\
e_{im} & e_m \alpha_2 & \hdots & e_m \alpha_m-1
\end{matrix}\right|
\end{equation*}


We now prove the claim by noting that $p \cdot e_i = p' \cdot e_i$ if and only if this property is true when we change just two entries in $\alpha_i$ and leave the others unchanged i.e. $\alpha'_j = \alpha_j + \delta$ and $\alpha'_k = \alpha_k - \delta$ for $0 \leq \delta \leq \alpha_k$ for $j, k \in [m], j \neq k$ and $\alpha'_r = \alpha_r$ for all $r \in [m] \backslash \{j,k\}$.

We distinguish two cases depending on whether $1 \in \{j, k\}$. First suppose that $j = 1$ and without loss of generality assume $k = 2$. Let $\boldsymbol\delta$ be the vector with zeroes except at position $i$ where it equals $\delta$. Then we have
\begin{align*}
p' \cdot e_i = & \frac{1}{c}  \left|
\begin{matrix}
e_{i1} & e_1(\alpha_2 -\boldsymbol\delta) & \hdots & e_1 \alpha_m \\
e_{i2} & e_2(\alpha_2 -\boldsymbol\delta) -1 & \hdots & e_2 \alpha_m \\
\vdots & \vdots & \ddots & \vdots \\
e_{im} & e_m (\alpha_2-\boldsymbol\delta) & \hdots & e_m \alpha_m-1
\end{matrix}\right| \\
= & \frac{1}{c} \left( \left|
\begin{matrix}
e_{i1} & e_1\alpha_2 & \hdots & e_1 \alpha_m \\
e_{i2}  & e_2\alpha_2-1 & \hdots & e_2 \alpha_m \\
\vdots & \vdots & \ddots & \vdots \\
e_{im} & e_m \alpha_2 & \hdots & e_m \alpha_m-1
\end{matrix}\right| 
 +\left|
\begin{matrix}
e_{i1} & -\delta e_{i1} & \hdots & e_1 \alpha_m \\
e_{i2}  & -\delta e_{i2} & \hdots & e_2 \alpha_m \\
\vdots & \vdots & \ddots & \vdots \\
e_{im} & -\delta e_{im} & \hdots & e_m \alpha_m-1
\end{matrix}\right|\right) \\
= &  p \cdot e_i
\end{align*}
and the conclusion follows; the case where $j, k \neq 1$ is similar: adding up columns $j, k$ in the expression for the determinant does not change the value of $p \cdot e_i$ and $p' \cdot e_i$ and after that we can compute the determinant by splitting it as above.
\end{proof}

\begin{theorem}
The incentive ratio for Cobb--Douglas markets is at most $m$.
\end{theorem}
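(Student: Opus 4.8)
The plan is to reduce the ratio, for a fixed market datum, to a sum of at most $m$ nonnegative terms each bounded by $1$. Fix an agent $i$, an admissible endowment profile $e$ (recall $\sum_i e_{ij}=1$), an admissible truthful profile $u_{-i}$ and an admissible misreport $u'_i$; write $\alpha_i,\alpha'_i,\alpha_{-i}$ for the exponent vectors and $p(\beta_i,\alpha_{-i})$ for the equilibrium price when agent $i$ plays $\beta_i$ and the others play $\alpha_{-i}$, and let $\mathbf{1}_j$ denote the strategy placing all weight on commodity $j$. Under Assumption~\ref{assumption1} the equilibrium price is unique and Cobb--Douglas demand is single-valued, so the $\max$ and $\min$ over $\mathcal{E}(\cdot)$ in the definition of $\zeta^{CD}$ are vacuous and it suffices to prove $u_i(x'_i)/u_i(x_i)\le m$ for every such datum. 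Using the closed-form demand $x_{ij}=\alpha_{ij}(p\cdot e_i)/p_j$ at the truthful equilibrium and $x'_{ij}=\alpha'_{ij}(p'\cdot e_i)/p'_j$ at the misreported one (the latter bundle then evaluated by the true $u_i$), the short computation recorded before the statement gives
$$\frac{u_i(x'_i)}{u_i(x_i)}=\prod_{j\,:\,\alpha_{ij}>0}\left(\frac{\alpha'_{ij}}{\alpha_{ij}}\,\frac{p'\cdot e_i}{p\cdot e_i}\,\frac{p_j}{p'_j}\right)^{\alpha_{ij}},$$
the coordinates with $\alpha_{ij}=0$ not appearing since $u_i$ ignores them.

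Next I would apply the weighted AM--GM inequality with weights $\alpha_{ij}$ (they sum to $1$) to bound this product by $\sum_j\alpha'_{ij}\frac{p'\cdot e_i}{p\cdot e_i}\frac{p_j}{p'_j}=\sum_j\frac{p'\cdot e_i}{p\cdot e_i}\cdot\frac{\alpha'_{ij}}{p'_j}\cdot p_j$, which is exactly the inequality displayed just before Proposition~\ref{bounds}. Then fix the adjugate normalisation of Proposition~\ref{bounds}(iii): in it $p\cdot e_i=p'\cdot e_i$ for \emph{every} choice of $\alpha_i,\alpha'_i$, so the budget factor is identically $1$ and the $j$-th summand is simply $\alpha'_{ij}\,p_j(\alpha_i,\alpha_{-i})/p_j(\alpha'_i,\alpha_{-i})$, where the numerator carries the truthful price and the denominator the misreported one. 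By Proposition~\ref{bounds}(i), $p_j(\,\cdot\,,\alpha_{-i})$ is maximised over agent $i$'s strategy at $\mathbf{1}_j$, so this summand is at most $\alpha'_{ij}\,p_j(\mathbf{1}_j,\alpha_{-i})/p_j(\alpha'_i,\alpha_{-i})$; and Proposition~\ref{bounds}(ii), which says $\alpha'_{ij}/p'_j$ is maximised over agent $i$'s strategy at $\mathbf{1}_j$ — where its value is $1/p_j(\mathbf{1}_j,\alpha_{-i})$, the two economies then coinciding — gives $\alpha'_{ij}\,p_j(\mathbf{1}_j,\alpha_{-i})\le p_j(\alpha'_i,\alpha_{-i})$. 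Hence every summand is $\le 1$, there are at most $m$ of them, and $u_i(x'_i)/u_i(x_i)\le m$; taking maxima over $e$, $u_{-i}$, $u'_i$ and over $i$ yields $\zeta^{CD}\le m$.

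The step I expect to be the genuine obstacle is the normalisation bookkeeping. Parts (i) and (ii) of Proposition~\ref{bounds} are stated with a single commodity as num\'eraire, whereas part (iii) fixes the adjugate normalisation, and the argument above must use all three simultaneously. The way around it is to note that the summand $\frac{p'\cdot e_i}{p\cdot e_i}\frac{\alpha'_{ij}}{p'_j}p_j$ is invariant under rescaling $p$ and $p'$ separately, so it may be evaluated in whichever normalisation is convenient; in the adjugate one the budget factor collapses by part (iii), while parts (i) and (ii) remain valid there because relative prices are normalisation-free and, in the adjugate scaling, $p\cdot e_i$ is constant in agent $i$'s strategy — so "maximising $p_j$'' and "maximising $\alpha'_{ij}/p'_j$'' still have $\mathbf{1}_j$ as maximiser, and the key point that $\frac{p'\cdot e_i}{p\cdot e_i}$ may be replaced by $1$ uniformly (not just at one strategy) is precisely what licenses pulling the two maxima inside each summand. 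The remaining points — disposing of the equilibrium-selection $\max/\min$ and of the coordinates with $\alpha_{ij}=0$ — are routine, handled respectively by Assumption~\ref{assumption1} and by the fact that $u_i$ does not depend on those coordinates.
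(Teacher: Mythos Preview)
Your proof is essentially the paper's: both express the ratio via the closed-form Cobb--Douglas demand, invoke Proposition~\ref{bounds} to reduce each piece to $1/\alpha_{ij}$ (equivalently, each summand to~$1$), and conclude by weighted AM--GM. The only cosmetic difference is ordering---the paper applies AM--GM last, obtaining $\prod_j(1/\alpha_{ij})^{\alpha_{ij}}\le m$, whereas you apply it first and bound summands, which is precisely the route the paper sketches in the paragraph preceding Proposition~\ref{bounds}; your explicit handling of the normalisation issue is, if anything, more careful than the paper's own.
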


\begin{proof}
By the results presented in Proposition \ref{bounds}, 
\begin{align*}
\frac{u_i(x'_i)}{u_i(x_i)} = \prod_{j=1}^m \left(\frac{\alpha'_{ij}}{\alpha_{ij}}\frac{p' \cdot e_i}{p \cdot e_i}\frac{p_j}{p'_j}\right)^{\alpha_{ij}} & \leq \prod_{j=1}^m \left(\frac{1}{\alpha_{ij}}\frac{p' \cdot e_i}{p \cdot e_i}\max_{\alpha'_i} \frac{\alpha'_{ij}}{p'_j} \max_{\alpha_i} p_j \right)^{\alpha_{ij}} \\ 
&  \leq \prod_{j=1}^m \left(\frac{1}{\alpha_{ij}}\right)^{\alpha_{ij}}  \leq m,
\end{align*}
where the last step follows from the weighted AM--GM inequality.
\end{proof}

Before concluding, we summarize the results presented here in Table \ref{tab:overview}.
\begin{table}[!h]
\caption{Upper bounds on the incentive ratio in $n \times m$ exchange economies. The bounds for Fisher markets come from \cite{Chen2011}, \cite{Chen2012}.}
\begin{center}
\begin{tabular}{ l c  c }
    \hline
    Market & Fisher & Exchange \\
    \hline
    Leontief & 2 & $\infty$  \\ 
       Linear & 2 & $\infty$ \\ 
    Cobb--Douglas (Without Assumption \ref{assumption1}, $n = 2$, $m =2$) & $e^{1/e}$ & $\infty$\\
    Cobb--Douglas (With Assumption \ref{assumption1}, any $n$, $m =2$) & $e^{1/e}$ & $e^{1/e}$\\
    Cobb--Douglas (With Assumption \ref{assumption1}, any $n$, $m \geq 3$) & $e^{1/e}$ & $m$ \\
    \hline
\end{tabular}
\end{center}
\label{tab:overview}
\end{table}

\section{Conclusion}
\label{sec:conclusion}
This paper surveyed the concept of incentive ratio in the much more general model of exchange economies as compared to Fisher markets. Results in Fisher markets were encouraging: the maximum gains from strategic behaviour were bounded by reasonably small constants and therefore equilibrium mechanisms could be expected to work rather well, meaning that the profits from (computationally challenging) strategic behaviour were small relative to the costs, and thus, not worthwhile on most occassions. In other words, the equilibrium mechanism in Fisher markets is quite robust against strategic behaviour. However, the results here indicate that in the more general setup of exchange economies, results are diametrically different and without further restrictions, ratios in linear, Leontief and Cobb--Douglas markets are unbounded. 

\bibliographystyle{abbrv}
\bibliography{bibliography}

\end{document}